\newtheorem{theorem}{Theorem}
\newtheorem{lemma}[theorem]{Lemma}
\def\@endtheorem{\endtrivlist}
\newcommand{\BP}{\mathit{BP}}
\newcommand{\NULL}{\mathrm{NULL}}
\newcommand{\Labelb}{\mathrm{label}}
\newcommand{\depthb}{\mathrm{depth}}
\newcommand{\parentb}{\mathrm{parent}}
\newcommand{\levelancestorb}{\mathrm{level\_ancestor}}
\newcommand{\Degb}{\mathrm{deg}}
\newcommand{\childrankb}{\mathrm{child\_rank}}
\newcommand{\childselectb}{\mathrm{child\_select}}
\newcommand{\numdescendantsb}{\mathrm{num\_descendants}}
\newcommand{\preorderrankb}{\mathrm{preorder\_rank}}
\newcommand{\preorderselectb}{\mathrm{preorder\_select}}
\newcommand{\postorderrankb}{\mathrm{postorder\_rank}}
\newcommand{\postorderselectb}{\mathrm{postorder\_select}}
\newcommand{\lcab}{\mathrm{lca}}
\newcommand{\bprankb}{\mathrm{bp\_rank}}
\newcommand{\bpselectb}{\mathrm{bp\_select}}
\newcommand{\bpopenb}{\mathrm{bp\_open}}
\newcommand{\bpcloseb}{\mathrm{bp\_close}}
\newcommand{\bpnodeb}{\mathrm{bp\_node}}
\newcommand{\Label}[1]{\Labelb(#1)}
\newcommand{\depth}[2]{\depthb_{#1}(#2)}
\newcommand{\parent}[2]{\parentb_{#1}(#2)}
\newcommand{\levelancestor}[3]{\levelancestorb_{#1}(#2,#3)}
\newcommand{\Deg}[2]{\Degb_{#1}(#2)}
\newcommand{\childrank}[2]{\childrankb_{#1}(#2)}
\newcommand{\childselect}[3]{\childselectb_{#1}(#2,#3)}
\newcommand{\numdescendants}[2]{\numdescendantsb_{#1}(#2)}
\newcommand{\preorderrank}[2]{\preorderrankb_{#1}(#2)}
\newcommand{\preorderselect}[2]{\preorderselectb_{#1}(#2)}
\newcommand{\postorderrank}[2]{\postorderrankb_{#1}(#2)}
\newcommand{\postorderselect}[2]{\postorderselectb_{#1}(#2)}
\newcommand{\parentx}[1]{\parentb(#1)}
\newcommand{\preorderrankx}[1]{\preorderrankb(#1)}
\newcommand{\preorderselectx}[1]{\preorderselectb(#1)}
\newcommand{\lcax}[2]{\lcab(#1,#2)}
\newcommand{\D}[2]{\mathcal{D}_{#1,#2}} 
\newcommand{\T}[2]{#1_{#2}} 
\newcommand{\V}[1]{V(#1)} 
\newcommand{\C}[1]{C(#1)} 
\newcommand{\idxzero}{0}
\newcommand{\idxtotal}{1}
\newcommand{\idxle}{2}
\newcommand{\idxgt}{3}
\newcommand{\idxeq}{4}
\newcommand{\idxchildren}{5}
\newcommand{\idxleftmostchild}{6}
\newcommand{\idxspecial}{7}
\newcommand{\Sumb}{\mathrm{sum}}
\newcommand{\fwdsearchb}{\mathrm{fwd\_search}}
\newcommand{\bwdsearchb}{\mathrm{bwd\_search}}
\newcommand{\RMQib}{\mathrm{RMQi}}
\newcommand{\bp}[1]{\mathrm{bp}(#1)}
\newcommand{\bprank}[2]{\bprankb_{#1}(#2)}
\newcommand{\bpselect}[2]{\bpselectb_{#1}(#2)}
\newcommand{\bpopen}[1]{\bpopenb(#1)}
\newcommand{\bpclose}[1]{\bpcloseb(#1)}
\newcommand{\bpnode}[1]{\bpnodeb(#1)}
\newcommand{\Sum}[4]{\Sumb(#1,#2,#3,#4)}
\newcommand{\fwdsearch}[4]{\fwdsearchb(#1,#2,#3,#4)}
\newcommand{\bwdsearch}[4]{\bwdsearchb(#1,#2,#3,#4)}
\newcommand{\RMQi}[4]{\RMQib(#1,#2,#3,#4)}
\newcommand{\rank}[3]{\mathrm{rank}_{#1}(#2,#3)}
\newcommand{\select}[3]{\mathrm{select}_{#1}(#2,#3)}
\newcommand{\access}[2]{\mathrm{access}(#1,#2)}
\newcommand{\tselect}{t_{\mathrm{select}}}
\newcommand{\map}[1]{\mathrm{map}(#1)}
\newcommand{\range}[2]{\{#1,\ldots,#2\}}
\begin{document}

\title{Succinct representation of labeled trees}
\author{Dekel Tsur%
\thanks{Department of Computer Science, Ben-Gurion University of the Negev.
Email: \texttt{dekelts@cs.bgu.ac.il}}}
\date{}
\maketitle

\begin{abstract}
We give a representation for labeled ordered trees
that supports labeled queries such as finding the $i$-th ancestor of a node
with a given label.
Our representation is succinct, namely the redundancy is
small-o of the optimal space for storing the tree.
This improves the representation of He et al.~\cite{HeMZ12} which is
succinct unless the entropy of the labels is small.
\end{abstract}

\section{Introduction}

A problem which was extensively studied in recent years is representing an ordered rooted
tree using space close to the information-theoretic lower bound
while supporting numerous queries on the tree, e.g.~\cite{Jacobson89,MunroR01,ChiangLL05,MunroRRR12,BenoitDMRRR05,GearyRR06,FarzanM08,NavarroS14}.
Geary et al.~\cite{GearyRR06} studied an extension of this problem, in which
the nodes of the tree are labeled with characters from alphabet $\range{1}{\sigma}$.
The tree queries now receive a character $\alpha$ as an additional argument, and the
goal of a query is to locate a certain node whose label is $\alpha$ or to count
the nodes satisfying some property and whose labels are $\alpha$.
The set of queries considered by Geary et al.\ is given in
Table~\ref{tab:labeled-queries}.
Geary et al.\ gave a representation that uses
$n\log\sigma+2n+O(n\sigma\log\log\log n/\log \log n)$ bits,
where $n$ is the number of nodes and $\sigma$ is the size of the alphabet,
and answers queries in constant time.
For $\sigma=o(\log\log n/\log\log\log n)$, the space is
$n\log\sigma+2n+o(n)$, namely, the space is $o(n)$ more than the
information-theoretic lower bound.

Barbay et al.~\cite{BarbayGMR07} and Ferragina et al.~\cite{FerraginaLMM09}
gave labeled tree representations that use space close to the lower bound
for large alphabets,
but the set of supported queries is more restricted.
He et al.~\cite{HeMZ12} improved the result of Geary et al.\ by showing
a labeled tree representation based upon a rank-select structure on
the string $P_T$ that contains the labels of the nodes in preorder.
Using the rank-select structure of Belazzougui and Navarro~\cite{BelazzouguiN12}
the following results were obtained:
\begin{inparaenum}[(1)]
\item
For $\sigma=w^{O(1)}$, there is a representation that uses $n H_0(P_T)+O(n)$
bits and answers queries in $O(1)$ time,
where $w$ is the word size and
$H_0(P_T)$ is the zero-order entropy of $P_T$.
\item
For $\sigma\leq n$, there is a representation that uses
$n H_0(P_T)+o(n H_0(P_T))+O(n)$ bits.
$\Labelb$ queries are answered in $O(1)$ time,
and $\preorderrankb$ queries are answered in $\omega(1)$ time
(namely, for every function $f$ satisfying $f(n)=\omega(1)$, the time is $O(f(n))$), or vice versa. Other queries are answered in
$O(\log\frac{\log\sigma}{\log w})$ time.
\end{inparaenum}
The representation of He et al.\ supports all the queries of
Table~\ref{tab:labeled-queries} and additional queries.
Note that the representation is succinct if $H_0(P_T)=\Omega(1)$.

In this paper we give a fully succinct representation of labeled trees.
Our result is given in the following theorem.
\begin{theorem}
Let $T$ be a labeled tree with $n$ nodes and labels from $\range{1}{\sigma}$.
\begin{enumerate}
\item
For $\sigma=w^{O(1)}$, there is a representation of $T$ that uses
$n H_0(P_T)+2n+o(n)$ bits
and answers the queries of Table~\ref{tab:labeled-queries} in $\omega(1)$ time.
\item
For $\sigma\leq n$, there is a representation of $T$ that
uses $n H_0(P_T)+2n+o(n H_0(P_T))+O(n)$ bits.
$\Labelb$ queries are answered in $O(1)$ time,
and $\preorderrankb$ queries are answered in $\omega(1)$ time,
or vice versa.
The rest of the queries of Table~\ref{tab:labeled-queries} are answered in
$O(\log\frac{\log\sigma}{\log w})$ time.
\end{enumerate}
\end{theorem}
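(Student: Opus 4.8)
The plan is to keep the tree shape and the labels in two essentially separate layers and to reduce every query of Table~\ref{tab:labeled-queries} to a constant number of calls to three primitives: (a) shape navigation on a plain ordinal tree, (b) $\mathrm{rank}$/$\mathrm{select}$/$\mathrm{access}$ on the preorder label string $P_T$, and (c) a small family of \emph{label-restricted} prefix-sum searches on the parenthesis string $\bp{T}$. For layer (a) I would use the balanced-parentheses representation, which occupies $2n+o(n)$ bits and supports $\depthb$, $\parentb$, $\levelancestorb$, $\lcab$, $\numdescendantsb$, the preorder/postorder conversions, and the matching primitives $\mathrm{findopen}$/$\mathrm{findclose}$ in $O(1)$ time. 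For layer (b) I would store $P_T$ with a rank-select structure; for $\sigma=w^{O(1)}$ this can be done in $nH_0(P_T)+o(n)$ bits, which together with the $2n+o(n)$ of the parentheses already gives the space bound of part~1.

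The easy queries are the subtree- and preorder-type ones. Since the descendants of a node $x$ occupy a contiguous interval of preorder ranks, $\Labelb$, $\preorderrankb$, and $\numdescendantsb$ with a fixed label reduce directly to one $\mathrm{access}$ and two $\mathrm{rank}$ queries on $P_T$, while $\childrankb$/$\childselectb$ restricted to a label reduce to $\mathrm{rank}$/$\mathrm{select}$ inside the preorder interval of the children of $x$. The obstacle is the ancestor-type queries $\depthb_\alpha$ and $\levelancestorb_\alpha$, because the ancestors of $x$ do not form a contiguous preorder range. I would encode them as a label-restricted excess problem on $\bp{T}$: relative to a query label $\alpha$, give weight $+1$ to the opening parenthesis of each $\alpha$-labeled node, weight $-1$ to its closing parenthesis, and $0$ to everything else. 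Then $\depthb_\alpha(x)$ is exactly the weighted prefix sum up to the opening parenthesis of $x$, and $\levelancestorb_\alpha(x,i)$ is the position where this $\alpha$-excess last attains a prescribed value. These are precisely the primitives $\Sumb$, $\fwdsearchb$, $\bwdsearchb$, and $\RMQib$.

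The main difficulty, and the crux of the whole result, is to support these label-restricted searches with only $o(n)$ redundancy and in the stated time. A single range-min-max tree that stored, for every block and every label $\alpha$, the minimum and maximum $\alpha$-excess would cost $\Theta(\sigma)$ words per block, which is affordable only when blocks are long—yet long blocks make the within-block part of a search expensive. I therefore plan a multi-level decomposition: at the coarse level, super-blocks long enough that the per-label aggregate counts total only $o(n)$ bits; at intermediate levels, a range-min-max-style tree that locates first the super-block and then the block containing the answer; and at the finest level, a within-block search that uses the fast $\mathrm{rank}$/$\mathrm{select}$ on $P_T$ to visit only the $\alpha$-labeled parentheses, finishing through precomputed tables. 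Balancing these levels so that the space stays $o(n)$ while the time stays $\omega(1)$ is the heart of the argument, and it is what forces the $\omega(1)$ slack in part~1 rather than the $O(1)$ of He et al.

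Two points then complete the picture. First, the $\omega(1)$-versus-$O(1)$ trade-off between $\Labelb$ and $\preorderrankb$ in part~1, and the $O(\log\frac{\log\sigma}{\log w})$ bound for the remaining queries in part~2, are inherited from the underlying rank-select structure on $P_T$: for $\sigma=w^{O(1)}$ the $nH_0(P_T)+o(n)$-bit structure yields the trade-off, while for $\sigma\le n$ the alphabet-partitioned structure of Belazzougui and Navarro gives $nH_0(P_T)+o(nH_0(P_T))+O(n)$ bits and the logarithmic query time, the $2n$ term still coming explicitly from $\bp{T}$. Second, the remaining verifications—assembling the auxiliary bitvectors that mark the label classes inside each block and checking the within-block table lookups—are routine; the delicate point throughout is that per-label aggregates may be stored only at the coarse level, so each $\fwdsearchb$/$\bwdsearchb$ must reconstruct the fine-level behaviour on the fly from $P_T$.
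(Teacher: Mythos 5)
Your two-layer architecture (shape via balanced parentheses, labels via a rank-select structure on $P_T$) and your reduction of $\depthb_\alpha$ and $\levelancestorb_\alpha$ to label-restricted excess searches are sound, but the crux of the theorem is missing: you explicitly defer ``balancing these levels so that the space stays $o(n)$ while the time stays $\omega(1)$'' as ``the heart of the argument'' without carrying it out, and the mechanism you sketch would fail. Your blocks and super-blocks are intervals of \emph{positions} of the parenthesis string, hence label-oblivious: a single block may contain up to its full length of $\alpha$-parentheses, so a within-block search that inspects $\alpha$-positions one by one via select on $P_T$ is not bounded by any slowly growing function; conversely, descending from a super-block to the block containing the answer requires per-label information at block granularity, which is exactly what you declared unaffordable. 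The paper escapes this tension by making the decomposition \emph{label-dependent} rather than position-dependent: for each frequent label $\alpha$ it builds the tree $T^\alpha$ on the $\alpha$-nodes and their parents, covers it by the modified Farzan--Munro decomposition $\D{T^\alpha}{L}$ of Lemma~\ref{lem:tree-cover} into pieces of size $O(L)$, and records for each piece only a constant number of weights with values in $\range{0}{O(L)}$. The pieces of \emph{all} labels --- $O(n/L)$ of them in total, since $\sum_\alpha |X_\alpha| = O(n)$ --- are merged into a single weighted tree $T'$, stored by the aB-tree-based representation of Lemma~\ref{lem:weighted-tree} in $O((n/L)\log L)+o(n)=o(n)$ bits with $O(1)$-time weighted navigation; the contents of a piece are never stored but re-enumerated on demand using property~\ref{enum:interval} of Lemma~\ref{lem:tree-cover} (the non-root nodes of a piece form two preorder intervals) together with select on $P_T$, at cost $O(L\cdot\tselect)$ per piece. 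Because every piece contains $O(L)$ $\alpha$-nodes \emph{by construction}, $L$ may be taken to be an arbitrary $\omega(1)$ function (or $\sqrt{\log\frac{\log\sigma}{\log w}}$ for large $\sigma$), which is what yields the stated time bounds; no label-oblivious blocking of the parenthesis string gives this guarantee.

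There is also a concrete error in your ``easy'' cases: you claim that $\Degb_\alpha$, $\childrankb_\alpha$ and $\childselectb_\alpha$ ``reduce to rank/select inside the preorder interval of the children of $x$''. The children of $x$ are \emph{not} contiguous in preorder --- each child is followed by its entire subtree --- so no constant number of rank/select calls on $P_T$ counts or selects the $\alpha$-children of $x$. These queries are a genuine part of the difficulty: the paper dedicates the weights $w_{\idxchildren}$ and $w_{\idxleftmostchild}$ of $\T{T^\alpha}{L}$ and separate algorithms to them, combining weighted navigation on $T'$ with scans of $\V{\cdot}$ sets and a case analysis on whether the queried node is the special node of its piece. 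Finally, a small slip: the $O(1)$-versus-$\omega(1)$ trade-off between $\Labelb$ and $\preorderrankb$ belongs to part 2 of the statement, not part 1; in part 1 all queries are answered in $\omega(1)$ time.
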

Note that our representation supports only the queries considered by
Geary et al.\ and it does not support the additional queries considered by He et al.

\begin{table}
\caption{Supported queries on a labeled tree.
A node with label $\alpha$ is called an $\alpha$-node.
A $\alpha$-child of a node is a child which is an $\alpha$-node.
Other $\alpha$- terms are defined similarly.
\label{tab:labeled-queries}}
\centering
\begin{tabular}{lp{11.4cm}}
\toprule
Query & Description \\
\midrule
$\Label{x}$ & The label of $x$. \\
$\depth{\alpha}{x}$ & The number of $\alpha$-nodes on the path from
						the root to $x$. \\
$\levelancestor{\alpha}{x}{i}$ & The $\alpha$-ancestor $y$ of $x$ for which
						$\depth{\alpha}{y} = \depth{\alpha}{x}-i$. \\
$\parent{\alpha}{x}$ & $\levelancestor{\alpha}{x}{1}$. \\
$\Deg{\alpha}{x}$ & The number of $\alpha$-children of $x$. \\
$\childrank{\alpha}{x}$ & The rank of $x$ among its $\alpha$-siblings. \\
$\childselect{\alpha}{x}{i}$ & The $i$-th $\alpha$-child of $x$. \\
$\numdescendants{\alpha}{x}$ & The number of $\alpha$-descendants of $x$. \\
$\preorderrank{\alpha}{x}$ & The preorder rank of $x$ among
the $\alpha$-nodes. \\
$\preorderselect{\alpha}{i}$ & The $i$-th $\alpha$-node in the preorder. \\
$\postorderrank{\alpha}{x}$ & The postorder rank of $x$ among
the $\alpha$-nodes. \\
$\postorderselect{\alpha}{i}$ & The $i$-th $\alpha$-node in the postorder. \\
\bottomrule
\end{tabular}
\end{table}

\section{Preliminaries}
A \emph{rank-select structure} stores a string $S$ over alphabet
$\range{1}{\sigma}$ and supports the following queries:
\begin{inparaenum}[(1)]
\item
$\rank{\alpha}{S}{i}$ returns the number of occurrences of $\alpha$ in the first
$i$ characters of $S$
\item
$\select{\alpha}{S}{i}$ returns the $i$-th occurrence of $\alpha$ in $S$
\item
$\access{S}{i}$ returns the $i$-th character of $S$.
\end{inparaenum}
The problem of designing a succinct rank-select structure with efficient query
times was studied extensively. For our purpose, we use the following results.
\begin{theorem}[Belazzougui and Navarro~\cite{BelazzouguiN12}]
\label{thm:rank-select}
A rank-select structure can be built on a string $S$ of length $n$
over alphabet $\range{1}{\sigma}$ such that
\begin{inparaenum}[(1)]
\item
If $\sigma=w^{O(1)}$, the space is $n H_0(S)+o(n)$ bits,
and the structure answers rank queries in $O(1)$ time
\item
If $\sigma \leq n$, the space is $n H_0(S)+o(n H_0(S))+o(n)$ bits,
and the structure answers rank queries
in $O(\log \frac{\log\sigma}{\log w})$ time.
The structure answers access queries in $O(1)$ time and select
queries in $\omega(1)$ time, or vice versa.
\end{inparaenum}
\end{theorem}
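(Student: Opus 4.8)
Since the statement is the rank-select result of Belazzougui and Navarro, the plan is to build one framework that separates entropy-optimal storage (which serves $\access{S}{i}$) from auxiliary indexes (which serve $\rank{\alpha}{S}{i}$ and $\select{\alpha}{S}{i}$), and then to specialise its parameters to the two alphabet regimes. The guiding space identity is $n H_0(S)=\sum_\alpha n_\alpha\log(n/n_\alpha)$, where $n_\alpha$ counts the occurrences of a symbol $\alpha$; the guiding compression principle is that if $S$ is cut into equal-length blocks then, by concavity of entropy, the block-wise empirical entropies sum to at most $n H_0(S)$, so a block-local coder is never charged more than the global entropy plus lower-order rounding.

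The backbone I would start from is the representation of Golynski, Munro, and Rao, which stores a string over $\range{1}{\sigma}$ in $n\log\sigma\,(1+o(1))$ bits, answers $\access{S}{i}$ in $O(1)$ and $\select{\alpha}{S}{i}$ in $\omega(1)$ time (or the reverse), and reduces $\rank{\alpha}{S}{i}$ to a constant-time lookup on a sparse boundary bit vector plus a single predecessor search among the occurrences of $\alpha$ inside a chunk of $\Theta(\sigma)$ positions. To hit the stated rank time I would answer that predecessor query with an optimal structure rather than a plain van~Emde~Boas tree: over a universe of size $O(\sigma)$ with $w^{O(1)}$-factor space, predecessor search runs in $O(\log\frac{\log\sigma}{\log w})$ time (the near-linear-space regime of the fusion-tree / van~Emde~Boas tradeoff of Patrascu and Thorup), and this collapses to $O(1)$ exactly when $\sigma=w^{O(1)}$, which is what gives the $O(1)$ rank of part~(1). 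The same chunked layout answers $\access{S}{i}$ and $\select{\alpha}{S}{i}$ with the asymmetric $O(1)/\omega(1)$ tradeoff asserted in part~(2).

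To push the leading term from $n\log\sigma$ down to $n H_0(S)$ I would store the per-chunk symbol payload not as a plain array but through a block-wise zero-order coder in the style of Ferragina and Venturini: cut the payload into blocks of $b=\Theta(\log_\sigma n)$ symbols, which occupy $\Theta(\log n)$ bits each, and encode each block by a codeword whose length decreases with the frequency of the block's content, decoding in $O(1)$ through a universal table of $o(n)$ entries. By the concavity principle the codeword lengths sum to $n H_0(S)$ up to lower-order terms, and the table, the sampled cumulative symbol counts, and the chunk and boundary indexes are sized to contribute only $o(n H_0(S))+o(n)$. For $\sigma=w^{O(1)}$ each symbol is $O(\log w)$ bits, the redundancy is absorbed into $o(n)$, and the bound becomes $n H_0(S)+o(n)$ as in part~(1); for $\sigma\le n$ the same accounting gives $n H_0(S)+o(n H_0(S))+o(n)$ as in part~(2).

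The step I expect to be the main obstacle is making the entropy compression coexist with the constant-time word primitives and the exact predecessor bound. The chunk and predecessor machinery wants a plain symbol array of $n\log\sigma\,(1+o(1))$ bits, whereas the compressed payload must still hand symbols (and the relative positions feeding the predecessor structure) to the chunk layout in $O(1)$ time; simultaneously, every auxiliary component must be sampled sparsely enough to stay inside $o(n H_0(S))+o(n)$. The sharpest tension is that a per-block encoding must not pay overhead growing with $\sigma$: a naive multinomial $(\text{class},\text{offset})$ code would incur $\Theta(\sigma\log b)$ bits per block and push the redundancy back to $\Theta(n)$ for large polylogarithmic $\sigma$, so the coder and the block length must be chosen so that all three budgets—leading term, redundancy, and query time—hold at once, and verifying this joint choice is the crux of the argument.
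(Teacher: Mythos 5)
The paper contains no proof of this statement: it is imported verbatim as a cited theorem of Belazzougui and Navarro~\cite{BelazzouguiN12}, so your proposal must be judged against that cited construction rather than against anything in this paper. Your uncompressed backbone is essentially the right one and matches theirs: rank is reduced, through the Golynski--Munro--Rao chunk layout, to a predecessor search on keys of $\log\sigma$ bits, and a van Emde Boas-style recursion that halves the key length and bottoms out in constant time (via word-level tricks) once keys occupy $O(\log w)$ bits gives $O(\log\frac{\log\sigma}{\log w})$ time, which collapses to $O(1)$ exactly when $\sigma=w^{O(1)}$; the asymmetric $O(1)/\omega(1)$ access/select tradeoff also comes from that layout. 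Up to here your sketch is a faithful reconstruction.

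The genuine gap is the compression step, and you have in fact located it yourself in your final paragraph without resolving it --- but resolving it \emph{is} the content of the theorem. Two concrete failures: first, Golynski--Munro--Rao is not a plain symbol array plus a detachable index; its select machinery is encoded in per-chunk permutations that themselves constitute the $n\log\sigma(1+o(1))$ bits, so you cannot swap in a Ferragina--Venturini-compressed ``payload'' and keep the indexes --- and the known systematic alternative (a succinct rank/select index over an arbitrary $O(1)$-time access oracle) carries redundancy on the order of $n\log\log\sigma$ bits, which is not $o(nH_0(S))+o(n)$ when $H_0(S)$ is small, nor even $o(n)$ when $\sigma=w^{O(1)}$. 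Second, Ferragina--Venturini block coding has redundancy $\Theta(n\log\log n/\log_\sigma n)$, which is $o(n)$ only for small $\sigma$; for $\sigma\le n$ it can be $\Theta(n\log\log n)$, again incompatible with $o(nH_0(S))+o(n)$. What Belazzougui and Navarro actually do is \emph{alphabet partitioning} (due to Barbay, Claude, Gagie, Navarro, and Nekrich): partition the symbols into classes of roughly equal frequency, store the length-$n$ sequence of class identifiers (a small alphabet) in a structure achieving its zero-order entropy plus $o(n)$, and store, for each class, the subsequence of $S$ restricted to that class with the \emph{uncompressed} chunk/predecessor structure over the class-local alphabet. The identity $nH_0(S)\approx H(\text{class sequence})+\sum_{c} n_c\log\sigma_c$ (up to lower-order terms, because symbols in a class have similar frequencies $n_\alpha\approx n/\sigma_c$-scaled) then yields $nH_0(S)+o(nH_0(S))+o(n)$ bits, while each query on $S$ translates into one query on the class sequence plus one query on a class subsequence, preserving the time bounds. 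Without this idea (or an equivalent one) your three budgets --- leading term, redundancy, and query time --- cannot be met simultaneously, so the proposal as written establishes neither the space bound of part~(2) nor, via the composition you describe, the $o(n)$ redundancy of part~(1).
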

\begin{theorem}[Raman et al.~\cite{RamanRS07}]
\label{thm:rank-select-binary}
A rank-select structure can be built on a binary string $S$ of length $n$
that contains $k$ ones
such that the space is $O(k \log(n/k))+o(n)$ bits, and
the structure answers queries in $O(1)$ time.
\end{theorem}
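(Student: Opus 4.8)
The plan is to prove this by reconstructing the block-decomposition ``indexable dictionary'' of Raman, Raman and Satti. Assume without loss of generality that $k\le n/2$ (otherwise exchange the roles of $0$ and $1$, keeping a single flag bit and symmetric copies of the structure below). First I would partition the positions $\range{1}{n}$ into $m=\lceil n/b\rceil$ consecutive blocks of length $b=\lceil\tfrac12\log n\rceil$, and describe each block $B_j$ by its \emph{class} $c_j$ (its number of ones, an integer in $\range{0}{b}$) and its \emph{offset} $o_j$ (the rank of the block's bit pattern among all length-$b$ patterns with exactly $c_j$ ones). The classes are kept in an array $C$ using $\lceil\log(b+1)\rceil$ bits each, and the offsets are concatenated into one bit-stream in which $o_j$ occupies $\lceil\log\binom{b}{c_j}\rceil$ bits. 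A universal lookup table, indexed by a (class, offset) pair, returns the corresponding pattern; since $b\le\tfrac12\log n+1$ this table has $O(\sqrt n\,)$ patterns and takes $o(n)$ bits.

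For the space bound, the offset stream has total length $\sum_j\log\binom{b}{c_j}+O(m)$. Because $\sum_j c_j=k$ and $\sum_j b=n$, the product $\prod_j\binom{b}{c_j}$ is a single term of the Vandermonde expansion of $\binom{n}{k}$, whence $\sum_j\log\binom{b}{c_j}\le\log\binom{n}{k}=O(k\log(n/k))$, using $\log(n/k)\ge1$ for $k\le n/2$. The rounding term $O(m)=O(n/\log n)$ and the class array $O(m\log b)=O(n\log\log n/\log n)$ are both $o(n)$, so the encoding occupies $O(k\log(n/k))+o(n)$ bits in all.

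It remains to answer $\access{S}{i}$, $\rank{\alpha}{S}{i}$ and $\select{\alpha}{S}{i}$ in $O(1)$ time, and the one real difficulty is that the offsets have variable length, so the bit position of $o_j$ is the prefix sum $\sum_{j'<j}\lceil\log\binom{b}{c_{j'}}\rceil$ over classes. I would resolve this with the standard word-parallel trick: group the class array into groups of $g=\Theta(\log n/\log\log n)$ consecutive entries, each group occupying at most $\tfrac12\log n$ bits so that it fits in a machine word. A precomputed table with $O(\sqrt n\,)$ entries then maps a packed group to the sum of the corresponding offset lengths, and to the sum of the corresponding classes, in $O(1)$. Storing the cumulative offset length and cumulative number of ones at each group boundary ($o(n)$ bits in total) lets me locate $o_j$ and read the number of ones before block $j$ in constant time; a final table lookup on $(c_j,o_j,i\bmod b)$ completes $\access{S}{i}$ and $\rank{1}{S}{i}$, and $\rank{0}{S}{i}=i-\rank{1}{S}{i}$.

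For $\select{\alpha}{S}{i}$ I would put the classic two-level select directory on top of the group-boundary cumulative-popcount values: periodically sampled ones point to their group, sparse sampled ranges store their few answers explicitly, and a dense range reduces to a single word-parallel scan of the group's packed classes followed by one in-block table lookup; select over $0$s is handled symmetrically by replacing popcounts with their complements. The main obstacle throughout is exactly this constant-time navigation over variable-length offsets, and it is overcome once the low-entropy class array is word-packed and equipped with the accompanying lookup tables. With these in place every query decomposes into a constant number of directory reads and table lookups, and all auxiliary structures are bounded by $o(n)$ bits, which yields the claimed time and space bounds.
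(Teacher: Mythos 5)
The paper does not actually prove this statement---it imports it verbatim as a citation to Raman et al.---and your reconstruction is precisely the construction underlying that citation: the block/class/offset encoding with the superadditivity bound $\sum_j\log\binom{b}{c_j}\le\log\binom{n}{k}$, word-packed class arrays plus $O(\sqrt{n})$-entry lookup tables to navigate the variable-length offsets in constant time, and Clark-style select directories, all of which is correct and within the stated $O(k\log(n/k))+o(n)$ budget. One small loose end: after your initial swap for $k>n/2$ you still owe the (easy) check that $(n-k)\log\frac{n}{n-k}=O(k\log(n/k))+o(n)$ in that regime---or you can drop the swap entirely for the space analysis, since $\log\binom{n}{k}=\log\binom{n}{n-k}$ and the $o(n)$ term absorbs the difference.
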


We also use the following result on representation of (unlabeled) ordered trees.
We are intrested in a representation that supports
the unlabeled versions of the queries listed in Table~\ref{tab:labeled-queries}
and additional queries such as $\lcab$ queries.
\begin{theorem}[Navarro and Sadakane~\cite{NavarroS14}]
\label{thm:unlabeled-tree}
An ordered tree can be stored using $2n+o(n)$ bits such that
tree queries can be answered in $O(1)$ time.
\end{theorem}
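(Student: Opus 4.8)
The plan is to store $T$ through its balanced-parenthesis encoding and to reduce every query (the unlabeled versions of those in Table~\ref{tab:labeled-queries}, together with $\lcab$) to a constant number of primitive operations on that encoding, each then supported in $O(1)$ time with only $o(n)$ redundant bits. First I would fix the representation. A depth-first traversal of $T$ that writes an opening parenthesis when a node is first entered and a closing parenthesis when it is finally left produces a balanced string $\bp{T}$ of length exactly $2n$, stored verbatim in $2n$ bits and augmented by a standard $o(n)$-bit index giving $O(1)$-time rank and select over the two-symbol alphabet $\{(,)\}$. I identify each node with the position of its opening parenthesis. Reading ``('' as $+1$ and ``)'' as $-1$, the prefix sum up to a position (its \emph{excess}) equals the depth of the corresponding node, and the preorder (resp.\ postorder) rank of a node is the number of opening (resp.\ closing) parentheses up to its position. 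Hence $\depthb$, $\preorderrankb$, $\postorderrankb$ and their select counterparts follow directly from rank and select.

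Second, I would reduce the remaining navigation to a fixed toolkit acting on the excess sequence: the matching-parenthesis maps $\findclose{\cdot}$ and $\findopen{\cdot}$; the forward and backward excess searches $\fwdsearchb$ and $\bwdsearchb$, which return the nearest position on the right (resp.\ left) at which the excess first reaches a value prescribed relative to the query position; the range-minimum query $\RMQib$ on the excess; and partial sums $\Sumb$. Each tree operation expands into $O(1)$ of these primitives. For instance $\findclose{i}$ is a $\fwdsearchb$ for relative excess $-1$ started at $i$; the parent of a node is its nearest enclosing open parenthesis, obtained from a $\bwdsearchb$ for relative excess $-2$; $\numdescendantsb$ is read off from the distance between a node and its matching close; $\levelancestorb$ uses $\bwdsearchb$ with the appropriate relative excess; $\Degb$ and $\childselectb$ amount to counting or locating closing parentheses at a fixed excess inside a node's interval; and $\lcab$ is an $\RMQib$ over the interval spanned by the two nodes followed by an enclose step.

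Third---and this is the crux---I would build the \emph{range min-max tree} that answers $\fwdsearchb$, $\bwdsearchb$, $\RMQib$ and $\Sumb$ in $O(1)$ time. Cut $\bp{T}$ into blocks and store, for every block, its relative minimum and maximum excess, its total excess, and the multiplicity of its minimum; organize these summaries in a balanced tree whose internal nodes aggregate the same four quantities over their subtrees. A forward search proceeds in three stages: scan the block containing the start position; if the sought excess is not attained there, climb to the lowest ancestor whose stored minimum and maximum bracket the target, descend into the unique subtree that must contain it, and finally scan the located block. Within a block the scan is made $O(1)$ by the Four-Russians method: a single universal table, shared by all blocks and indexed by a $\tfrac12\log n$-bit chunk of parentheses together with the query parameters, occupies only $O(\sqrt n\,\mathrm{polylog}\,n)=o(n)$ bits. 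Backward search, $\RMQib$ and $\Sumb$ follow the identical block/climb/descend/block pattern against the stored maxima, minima and totals.

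The main obstacle is to make both the climb-and-descend phase run in $O(1)$ time \emph{and} the stored summaries occupy $o(n)$ bits, since a single balanced tree over $\Theta(n/\log n)$ blocks yields only $O(\log n)$-time searches, while enlarging blocks to shrink the summary space makes in-block scans too slow. I would resolve both at once by coarsening through a constant number of levels: blocks of size $\Theta(\log n)$ are grouped into superblocks of size $\Theta(\log^2 n)$, and so on for a fixed number of steps, each level storing $O(\log n)$-bit summaries. The summary arrays then total $o(n)$ bits, each level is traversed in $O(1)$ via its own precomputed table, and the coarsest array is small enough ($O(n/\log^c n)$ entries for a large constant $c$) to precompute all of its searches explicitly in $o(n)$ bits, giving $O(1)$ time overall. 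Summing the contributions---$2n$ bits for $\bp{T}$, $o(n)$ for the rank/select and summary indices, and $o(n)$ for the universal and top-level tables---yields the claimed $2n+o(n)$ bits with $O(1)$-time queries.
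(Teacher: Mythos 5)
The paper never proves this statement: Theorem~\ref{thm:unlabeled-tree} is imported as a black box from Navarro and Sadakane~\cite{NavarroS14}, and the closest the paper comes to a proof is the review of that construction inside the proof of Lemma~\ref{lem:weighted-tree} (blocks stored as aB-trees of P\u{a}tra\c{s}cu, plus auxiliary inter-block structures such as the tree $T_f$ with a weighted-ancestor structure). Your first two steps --- the $2n$-bit balanced-parenthesis string with $o(n)$-bit rank/select, and the reduction of all tree operations to $O(1)$ calls to $\Sumb$, $\fwdsearchb$, $\bwdsearchb$, $\RMQib$ and min-counting/min-selecting primitives on the excess sequence --- faithfully follow Navarro and Sadakane and are fine. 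The gap is in your third step, which you correctly identify as the crux but do not actually solve.

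Two claims there fail. First, ``each level is traversed in $O(1)$ via its own precomputed table'' is unjustified above the bottom level: a group of $\Theta(\log n)$ summaries, even stored relative to their superblock so that each takes $\Omega(\log\log n)$ bits, occupies $\omega(\log n)$ bits in total, so it cannot index a universal (Four-Russians) table of polynomial, let alone $o(n)$, size; the tabulation trick works only on the raw parenthesis chunks at the bottom level. Without it, your hierarchy is exactly the range min-max tree of Navarro and Sadakane in its $O(\log n)$-time form, and a constant number of levels does not repair this. Second, ``precompute all of its searches explicitly'' at the coarsest level is infeasible for $\fwdsearchb$ and $\bwdsearchb$: these queries carry an arbitrary excess target $d$ (needed, e.g., for $\levelancestorb$ with arbitrary $i$), so there are $\Theta(n)$ distinct targets per starting block, and tabulating all (block, target) pairs is far beyond $o(n)$ bits. (Prefix sums handle $\Sumb$ and a sparse table handles $\RMQib$ at the top level, so the excess searches are precisely the hard case.) What the actual proof does instead, as summarized in the proof of Lemma~\ref{lem:weighted-tree}, is: cut $P$ into blocks of size $N=w^c$ and store each as an aB-tree, whose depth is constant because the block has size polynomial in $w$, giving $O(1)$ in-block queries after an $o(n)$-bit shared table; and for inter-block $\fwdsearchb$, build the tree $T_f$ on the block maxima $M_i$ (node $i>0$ is a child of the minimum $j>i$ with $M_j>M_i$, with edge weight $M_j-M_i$) and answer with an $O(1)$-time weighted-ancestor query, the weights being bounded by $N$. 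This weighted-ancestor idea, or an equivalent ``next smaller value with arbitrary threshold in $O(1)$'' mechanism, is the missing ingredient; without it your construction does not achieve constant time.
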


\subsection{Representation of weighted trees}
In this section we consider the problem of representing ordered trees with
weights on the nodes. We will use weighted trees in our representation of
labeled trees.

Let $T$ be a tree with weights $w_1(v),\ldots, w_s(v)$ for each node,
where each weight is from $\range{0}{X-1}$.
For a set of nodes $U$, the \emph{$w_a$-weight} of $U$ is
$\sum_{v\in U} w_a(v)$.
The weighted tree queries we need are described in
Table~\ref{tab:weight-queries}. Throughout this section we assume that the balanced
parenthesis string of a tree is a binary string, where open and close parenthesis are
represented by 1 and 0, respectively.

\begin{table}
\caption{Supported queries on a weighted tree.
$\BP$ denotes the balanced parenthesis representation of the tree.
\label{tab:weight-queries}}
\centering
\begin{tabular}{lp{11.4cm}}
\toprule
Query & Description \\
\midrule
$w_a(x)$ & The $w_a$-weight of $x$. \\
$\depth{a}{x}$ & The $w_a$-weight of the nodes of the path from the root
			to $x$. \\
$\levelancestor{a}{x}{i}$ & The lowest ancestor $y$ of $x$ for which
			the $w_a$-weight of the nodes of the path from $y$ to $x$,
			excluding $x$, is at least $i$.\\
$\parent{a}{x}$ &  $\levelancestor{a}{x}{1}$.\\
$\Deg{a}{x}$ & The $w_a$-weight of the children of $x$. \\
$\childrank{a}{x}$ & The $w_a$ weight of $x$ and its left siblings. \\
$\childselect{a}{x}{i}$ & The leftmost child $y$ of $x$ for which
			$\childrank{a}{y} \geq i$. \\
$\numdescendants{a}{x}$ & The $w_a$-weight of the proper descendants of $x$. \\
$\bp{i}$ & The $i$-th character of $\BP$. \\
$\bpopen{x}$ & The index of the `1' character in $\BP$ that corresponds
			to $x$. \\
$\bpclose{x}$ & The index of the `0' character in $\BP$ that corresponds
			to $x$. \\
$\bpnode{i}$ & The node that corresponds to the $i$-th character of $\BP$. \\
$\bprank{a,b}{i}$ & The $w_a$-weight of the nodes that correspond
			to `1' characters of $\BP$ with index at most $i$,
			plus the $w_b$-weight of the nodes that correspond
			to `0' characters of $\BP$ with index at most $i$. \\
$\bpselect{a,b}{i}$ & The minimum index $j$ for which 
			$\bprank{a,b}{j} \geq i$. \\
\bottomrule
\end{tabular}
\end{table}

\begin{lemma}\label{lem:weighted-tree}
A weighted tree with $n$ nodes and $s=O(1)$ weight functions with weights
from $\range{0}{X-1}$, where $X=O(\log n)$, can be stored using at most
$2n\log(2X^s)+o(n)$ bits such
that the queries in Table~\ref{tab:weight-queries} are answered
in $O(1)$ time.
\end{lemma}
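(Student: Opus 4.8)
The plan is to build two cooperating structures over the balanced parenthesis (BP) string of $T$, which has length $2n$: the unweighted tree representation handles all navigation, and a separate weighted partial-sum index handles all the summations. First store $T$ with the representation of Theorem~\ref{thm:unlabeled-tree}, spending $2n+o(n)$ bits; this answers in $O(1)$ time the operations $\bpopenb$, $\bpcloseb$, $\bpnodeb$, $\bp{i}$, $\findopen{i}$, $\findclose{i}$, together with range-minimum and minimum-count queries on the $\pm1$ excess, and in particular lets us move between a node $x$ and its two BP positions $\bpopen{x},\bpclose{x}$ in constant time. Next, attach to each of the $2n$ BP positions the weight vector $(w_1(v),\ldots,w_s(v))$ of the node $v$ owning that position (the same vector at the open and the close position of $v$). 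Stored packed, these vectors occupy $2n\log(X^s)+o(n)$ bits; since $\log(2X^s)=1+\log(X^s)$, the two structures together use $2n+2n\log(X^s)+o(n)=2n\log(2X^s)+o(n)$ bits, as required.

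On the weight data I will build, for the $O(1)$ relevant sign patterns (weights on opens only, weights on closes only, and the signed ``weighted excess'' that adds $w_a$ at an open and subtracts it at a close), succinct indices supporting prefix sums ($\Sumb$), search-by-value ($\fwdsearchb$, $\bwdsearchb$), and range minima ($\RMQib$) in constant time. Because each weight lies in $\range{0}{X-1}$ with $X=O(\log n)$ and $s=O(1)$, the alphabet $2X^s$ is polylogarithmic, so a block of $b=\Theta(\log n/\log\log n)$ positions has at most $(2X^s)^{b}\le n^{\epsilon}$ distinct contents; a universal (Four-Russians) table of $o(n)$ bits then evaluates any in-block quantity in $O(1)$ time, while a three-level sampling hierarchy (superblocks storing absolute sampled sums, blocks storing relative sums, in-block tables for the remainder) keeps the redundancy $o(n)$ and the query time $O(1)$. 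This is exactly the weighted generalisation of the range-min-max machinery behind Theorem~\ref{thm:unlabeled-tree}, the only change being that the unit increments $\pm1$ become bounded integer increments of absolute value $<X$.

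With these primitives most queries reduce to a constant number of calls. $w_a(x)$ is an access at $\bpopen{x}$; $\depth{a}{x}$ is the signed weighted-excess prefix sum up to $\bpopen{x}$ (the signed excess at a position equals the total $w_a$ of the currently open nodes, i.e.\ the ancestors of the node together with itself); $\numdescendants{a}{x}$ is the opens-only prefix sum over the range $(\bpopen{x},\bpclose{x})$; $\bprank{a,b}{i}$ is a prefix sum of the array weighting opens by $w_a$ and closes by $w_b$, and $\bpselect{a,b}{i}$ is its search-by-value inverse. Finally $\levelancestor{a}{x}{i}$, and hence $\parent{a}{x}$, is a backward search on the signed weighted-excess array started at $\bpopen{x}$ with target $i$, since the signed excess just before a node equals $\depth{a}{\cdot}$ of its parent, so moving up until the excess has dropped by $i$ locates the required ancestor.

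The main obstacle is the family $\Deg{a}{x}$, $\childrank{a}{x}$, $\childselect{a}{x}{i}$, since these sum $w_a$ over exactly the children of a node, i.e.\ over the BP positions attaining the minimum excess inside the parent's interval, which is not a plain prefix sum (intervening descendants break contiguity). In the unweighted case Theorem~\ref{thm:unlabeled-tree} settles this with range-minimum count and select, and I will lift those to a weighted range-minimum structure that additionally stores, at each summary node, the sum of $w_a$ over the minimum-attaining positions of its range. A range query over $(\bpopen{x},\bpclose{x})$ then returns the weighted child count $\Deg{a}{x}$ and the partial weighted counts needed for $\childrank{a}{x}$, and an accompanying search returns the $i$-th threshold for $\childselect{a}{x}{i}$. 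I expect the bulk of the work to lie in verifying that this extra field composes correctly across block boundaries and still fits in $o(n)$ bits: the stored values are $O(\log n)$-bit integers, so the sampling period must be $\omega(\log n)$, and one must check that the universal block tables absorb the remaining sub-$\Theta(\log n)$ gaps while keeping all such queries in $O(1)$ time.
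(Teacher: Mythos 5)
Your overall plan --- generalize the Navarro--Sadakane machinery from the $\pm1$ alphabet to an alphabet of tuples carrying the weights --- is the same as the paper's, and your space accounting ($2n+2n\log(X^s)+o(n)=2n\log(2X^s)+o(n)$) is fine. The genuine gap is in what you substitute for the block-level structure. The paper keeps Navarro--Sadakane's architecture: blocks of size $N=w^c$ for a large constant $c$, each stored in an aB-tree of P\u{a}tra\c{s}cu~\cite{Patrascu08}, which gives $O(1)$-time in-block queries with only $O(1)$ bits of redundancy per block; consequently there are only $O(n/w^c)$ blocks, and the global structures needed for inter-block $\fwdsearchb$, $\bwdsearchb$, $\RMQib$ and the $\Degb$/$\childrankb$/$\childselectb$ family (the trees $T_f$ with weighted-ancestor structures, their weights now scaled by $X$) cost $O((n/N)\log^{O(1)}n\cdot X)=o(n)$ bits. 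You replace the aB-trees by universal (Four-Russians) tables, and this forces the blocks to be tiny: a block must fit in $O(\epsilon\log n)$ bits to serve as a table index, i.e.\ $b=\Theta(\log n/\log\log n)$ positions. Then the number of blocks is $\Theta(n\log\log n/\log n)$, and any inter-block structure storing $\Theta(\log n)$ bits per block --- which the weighted-ancestor/search machinery of Navarro--Sadakane inherently does --- already costs $\Theta(n\log\log n)=\omega(n)$ bits, destroying the $o(n)$ redundancy.

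Relatedly, your claim that a ``three-level sampling hierarchy'' answers $\fwdsearchb$, $\bwdsearchb$ and the $\childselectb$-type searches in $O(1)$ time passes over exactly the hard part of Navarro--Sadakane rather than a routine extension of rank/select sampling. Sampling does give $O(1)$-time $\Sumb$, but finding the first position at which a prefix sum crosses an arbitrary threshold $d$ is a weighted-ancestor/predecessor-type problem over all the block summaries; this is precisely what the $T_f$ trees are for, and they are affordable only because the blocks have size $w^c$. Your own closing sentence contains the unresolved contradiction: to keep $o(n)$ bits of $O(\log n)$-bit fields the sampling period must be $\omega(\log n)$ positions, yet one universal-table lookup can bridge only $\Theta(\log n/\log\log n)$ positions, so covering a single inter-sample gap takes $\omega(\log\log n)$ lookups, not $O(1)$. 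Resolving this mismatch is what aB-trees accomplish --- they act as constant-depth recursive Four-Russians structures that make a block of size $w^c$ behave like a single table entry --- and the paper's proof consists essentially of observing that aB-trees over the tuple alphabet still have optimal space $N\log(2X^s)+O(1)$ per block. Without aB-trees or an equivalent device, your ``raw data plus flat index'' construction cannot simultaneously achieve $O(1)$ query time and $o(n)$ redundancy.
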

\begin{proof}
The representation is a variation of the unweighted tree representation
of Navarro and Sadakane~\cite{NavarroS14}.
We first review the latter representation.
Let $T$ be an unweighted tree, and let $P$ be its balanced parenthesis
representation.
Navarro and Sadakane showed that tree queries can be implemented by
supporting a set of \emph{base queries} that include
(1) the queries $\Degb$, $\childrankb$, and $\childselectb$
(2) the following queries on $P$ and a function
$f:\{0,1\}\to \{-1,0,1\}$ from a fixed set of functions $\mathcal{F}$:
\begin{align*}
\Sum{P}{f}{i}{j} & =\sum_{k=i}^{j}f(P[k])\\
\fwdsearch{P}{f}{i}{d} & =\min\{j\geq i:\Sum{P}{f}{i}{j} \geq d\}\\
\bwdsearch{P}{f}{i}{d} & =\max\{j\leq i:\Sum{P}{f}{i}{j} \geq d\}\\
\RMQi{P}{f}{i}{j} & =\mathrm{argmax}\{\Sum{P}{f}{1}{k}:i\leq k\leq j\}
\end{align*}
We note that Navarro and Sadakane used a slightly different definition for
$\fwdsearchb$ and $\bwdsearchb$, but their technique is easily modified for the
alternative definitions above.

The main idea of Navarro and Sadakane's representation is to partition $P$
into blocks of size $N=w^c$ for some constant $c$.
Each block of $P$ is stored using an aB-tree~\cite{Patrascu08} which
is able to support the base queries on the block in constant time.
The space of an aB-tree is $O(1)$ more than the information-theoretic lower
bound.
Since $P$ is a binary string, the space of one aB-tree is $N+O(1)$,
and the space for all trees is $2n+o(n)$.
In order to support base queries on the entire string $P$,
Navarro and Sadakane added additional data-structures.
The space of the additional data-structures is $O((n/N)\log^{O(1)}n)$.
Thus, by choosing large enough $c$, the additional space is $o(n)$.

For example, for supporting $\fwdsearchb$ for a function $f$,
a tree $T_f$ is constructed with weights from $\range{0}{N}$ on its edges,
and a weighted ancestor data-structure is built over $T_f$.
The tree $T_f$ is defined as follows.
Let $M_i$ be the maximum value of $\Sum{P}{f}{1}{j}$ for an
index $j$ that belongs to the $i$-th block of $P$.
The nodes of $T_f$ are $\{0,1,\ldots,n/N\}$.
A node $i>0$ is a child of node $j$, where $j$ is the minimum index for which
$j>i$ and $M_j > M_i$. The edge between these nodes have weight
$M_j-M_i$.
If no such index exists, $i$ is a child of node $0$.
A $\fwdsearch{P}{f}{i}{d}$ query is answered by first checking whether
the answer lies in the block of $i$ (using the aB-tree that stores the block).
If not, a weighted ancestor query
on $T_f$ finds the block in which the answer lies, and the location
inside the block is found using the aB-tree storing this block.

We now describe our representation for weighted trees.
Let $\BP$ denote the balanced parenthesis representation of $T$.
Define a string $P$ of length $2n$ in which each
character is a tuple of $s+1$ elements.
For an index $i$, the tuple $P[i]$ is $(w_1(v),w_2(v),\ldots,w_s(v),\BP[i])$,
where $v$ is the node that corresponds to $\BP[i]$.
As for the case of unweighted trees, it suffices to support
(1) the weighted tree queries
$\Degb$, $\childrankb$, and $\childselectb$
(2) $\Sumb$, $\fwdsearchb$, $\bwdsearchb$ and $\RMQib$ queries on the
string $P$ and a function $f$ (from a fixed set of functions $\mathcal{F}$)
which has the form
\[
\phi_{a,b}(x) = \begin{cases}
x_a & \text{if $x_{s+1} = 1$}\\
x_b & \text{if $x_{s+1} = 0$}
\end{cases}
\qquad \text{or} \qquad
\pi_{a}(x) = \begin{cases}
x_a & \text{if $x_{s+1} = 1$}\\
-x_a & \text{if $x_{s+1} = 0$}
\end{cases}
\]
where $x_a$ denotes the $a$-th coordinate of $x$
(we also denote $x_{\idxzero} = 0$).
To support the base queries, $P$ is partitioned into blocks and each block
is stored using an aB-tree.
The space of one aB-tree is now $N\log(2X^s)+O(1)$.
Thus, the total space of the aB-trees is $2n\log(2X^s)+o(n)$.
Additionally, since now the range of a function $f$ is $\range{-X}{X}$
whereas the range is $\range{-1}{1}$ for unweighted trees,
the space of the additional data-structures is increased
by a factor of at most $X$
(for example, the $T_f$ trees have now edge weights from $\range{0}{NX}$
and thus require more space).
Since $X=O(\log n)$, we can ensure the additional space is $o(n)$
by increasing $c$ by $1$.
\end{proof}

\subsection{Tree decomposition}
In the next lemma we present a tree decomposition that we will use in
our labeled tree representation.
The decomposition is slightly modified version of the decomposition of
Farzan and Munro~\cite{FarzanM08}.
An example for the decomposition is given in Figure~\ref{fig:decomposition-new}.
\begin{lemma}\label{lem:tree-cover}
For a tree $T$ with $n$ nodes and an integer $L$, there is a collection
$\D{T}{L}$ of subtrees of $T$ with
the following properties.
\begin{enumerate}
\item\label{enum:cover}
Every edge of $T$ appears in exactly one tree of $\D{T}{L}$.
\item\label{enum:size}
The size of each tree in $\D{T}{L}$ is at most $2L+1$.
\item\label{enum:number}
The number of trees in $\D{T}{L}$ is $O(n/L)$.
\item\label{enum:interval}
For every tree $T'\in  \D{T}{L}$ there are two intervals of integers
$I_1$ and $I_2$ such that a node $x\in T$ is
a non-root node of $T'$ if and only if $\preorderrankx{x} \in I_1 \cup I_2$,
where $\preorderrankx{x}$ is the preorder rank of $x$ in $T$.
The node with preorder rank $\min(\max(I_1) ,\max(I_2))$
is called the \emph{special node} of $T'$.
\item\label{enum:special}
For every $T'\in \D{T}{L}$, only the root and the special node of $T'$
can appear in other trees of $\D{T}{L}$.
\end{enumerate}
\end{lemma}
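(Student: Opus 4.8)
The plan is to obtain $\D{T}{L}$ from the greedy bottom-up grouping of Farzan and Munro~\cite{FarzanM08} and then to read properties~\ref{enum:interval} and~\ref{enum:special} off the shape of the resulting components. Call a node \emph{heavy} if its subtree in $T$ has more than $L$ nodes, and \emph{light} otherwise. I would build the component rooted at a node $r$ by scanning the children of $r$ from left to right, appending the \emph{complete} subtree of each light child while the running size stays below the budget; when a heavy child is reached, or when the budget would be exceeded, I descend into that single child along the path of its heavy descendants and stop at one node $b$, keeping the part of its subtree above $b$ and cutting off everything strictly below $b$. Thus at most one child is included only partially, and that partial inclusion removes exactly one subtree, namely the proper descendants of $b$. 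This single-hole shape is what makes the two-interval property fall out.

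Properties~\ref{enum:cover}--\ref{enum:number} are essentially those of~\cite{FarzanM08}. Each edge is placed in the unique component that first swallows it, giving property~\ref{enum:cover}. The stopping rule overshoots the budget $L$ by at most one complete light subtree (at most $L$ nodes) together with the single boundary node $b$, so every component has at most $2L+1$ nodes, which is property~\ref{enum:size}. For property~\ref{enum:number}, a charging argument shows that all but $O(n/L)$ of the components have $\Omega(L)$ \emph{private} nodes (nodes lying in no other component); the exceptional small components are created at the $O(n/L)$ nodes that have two or more heavy children, whose maximal heavy descendants have pairwise-disjoint subtrees of size exceeding $L$.

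The heart of the argument is property~\ref{enum:interval}. Fix a component $T'$ with root $r$ and boundary node $b$, let $[\ell,h]$ be the range of preorder ranks (in $T$) spanned by the non-root nodes of $T'$, and let $p_b$ be the preorder rank of $b$ and $s_b$ the size of the subtree of $b$ in $T$. By construction the non-root nodes of $T'$ fill $[\ell,h]$ except for the proper descendants of $b$, which form the contiguous sub-block $[p_b+1,\,p_b+s_b-1]$. Hence the non-root nodes occupy exactly
\[
I_1=[\,\ell,\;p_b\,]\qquad\text{and}\qquad I_2=[\,p_b+s_b,\;h\,],
\]
two intervals, as required. Since $\max I_1=p_b$ and every element of $I_2$ exceeds $p_b$, the node of preorder rank $\min(\max I_1,\max I_2)=p_b$ is precisely $b$, so the special node of $T'$ is the boundary node $b$. (If no child is included partially then $T'$ is a leaf of the decomposition, $I_2$ is empty, and the special node is taken to be the last preorder node of $T'$; this degenerate case is harmless.) Property~\ref{enum:special} now follows, because the only nodes of $T'$ that can lie in another component are $r$, shared with the parent component that swallowed it, and $b$, shared with the child components covering the subtrees below $b$; every other node $u$ of $T'$ has all of its $T$-edges inside $T'$ and therefore belongs to no other component.

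The step I expect to be the main obstacle is guaranteeing a \emph{single} boundary node per component while respecting both the size bound and the $O(n/L)$ count. At a node with several heavy children one cannot fold all of them into one component without creating several holes, so the grouping must instead emit several components sharing that node as their common root; this is exactly where the construction departs from plain Farzan--Munro grouping, and one must check that the extra components neither breach the $2L+1$ cap nor inflate the count beyond $O(n/L)$. Verifying that the downward descent always terminates at one node $b$, so that each component really is a preorder block with a single hole, is the crux; once it is in place, the interval computation above is immediate.
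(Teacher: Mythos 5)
There is a genuine gap, and it sits exactly where you predicted it would. Your construction, as described, violates property~\ref{enum:size}: when you reach a heavy child $c$ of $r$, descend ``along the path of its heavy descendants'' to a node $b$, and keep ``the part of its subtree above $b$,'' you retain every light subtree hanging off that heavy path. Concretely, let $c$ have $3L$ leaf children and one additional child $h$ whose subtree has $2L$ nodes; then $c$ is heavy, the heavy path goes through $h$, and for any stopping point $b$ at or below $h$ the kept part of $c$'s subtree already contains $c$, its $3L$ leaves, and $h$ --- more than $2L+1$ nodes. The only safe choice is $b=c$ itself, but then the heavy-path descent does nothing, and your proposal never pins down a stopping rule that simultaneously bounds the component size, keeps a single hole, and keeps the component count at $O(n/L)$; you explicitly flag this as ``the crux'' and leave it unresolved. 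A second, more structural error: you write that property~\ref{enum:cover} is ``essentially'' inherited from Farzan--Munro, but the opposite is true --- the Farzan--Munro decomposition covers all \emph{nodes} yet leaves the edges between different components uncovered, and restoring edge coverage is precisely the point of this lemma. So the one property you delegate to the citation is the one property the citation does not provide.

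The paper's proof avoids your crux entirely by never rebuilding the decomposition: it takes the Farzan--Munro decomposition as a black box (which already guarantees size at most $2L$, $O(n/L)$ components, the two-interval/preorder structure, and sharing only at roots), and then performs a local patch to fix property~\ref{enum:cover}: for each cross-edge $(v,w)$ with $v,w$ in different trees, either attach $w$ to the unique tree containing the non-root node $v$ (this preserves property~\ref{enum:interval} because that tree already contains $v$'s preorder predecessor), or, if $v$ is a root, create a new two-node tree $\{v,w\}$; finally, singleton trees are discarded. This patch is exactly what relaxes the size bound from $2L$ to $2L+1$ and what creates the special (non-root shared) nodes of property~\ref{enum:special}. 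If you want to salvage your from-scratch route, you would have to either reprove the Farzan--Munro guarantees for your greedy grouping (a substantial undertaking, and not possible with the stopping rule as stated), or restructure your argument the way the paper does: cite their decomposition for properties~\ref{enum:size}--\ref{enum:special} and prove only the edge-patching step yourself.
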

\begin{proof}
We first construct the decomposition of Farzan and Munro~\cite{FarzanM08}
(see Figure~\ref{fig:decomposition-old} for an example), which
satisfied the properties of the lemma, except for property~\ref{enum:cover}.
It also satisfies stronger versions of properties~\ref{enum:size}
and~\ref{enum:special}: Each tree in the decomposition has size at most $2L$,
and the common node of two trees in the decomposition can be only the root of
both trees.
Moreover, for a tree $T'$ in the decomposition,
for every edge $(v,w)$ between a node $v\in T'$ and a node $w\in T'$,
$v$ is the root of $T'$, except perhaps for one edge.

We now change the decomposition as follows
(see Figure~\ref{fig:decomposition-new}).
First, for every edge $(v,w)$ for which $v$ and $w$ are in different trees,
if $v$ is not a root of a tree in the decomposition,
add the node $w$ to the unique tree containing $v$.
Note that in this case, the tree also contains the predecessor of $v$ in the
preorder, so property~\ref{enum:interval} is maintained.
Otherwise, add a new tree to the decomposition that consists of the nodes
$v$ and $w$.
After the first step is performed, remove from the decomposition
all trees that consist of a single node.
It is easy to verify that the new decomposition satisfied all the properties
of the lemma.
\end{proof}
\begin{figure}
\centering
\hfill
\subfloat[\label{fig:decomposition-old}]{
\includegraphics[scale=0.5]{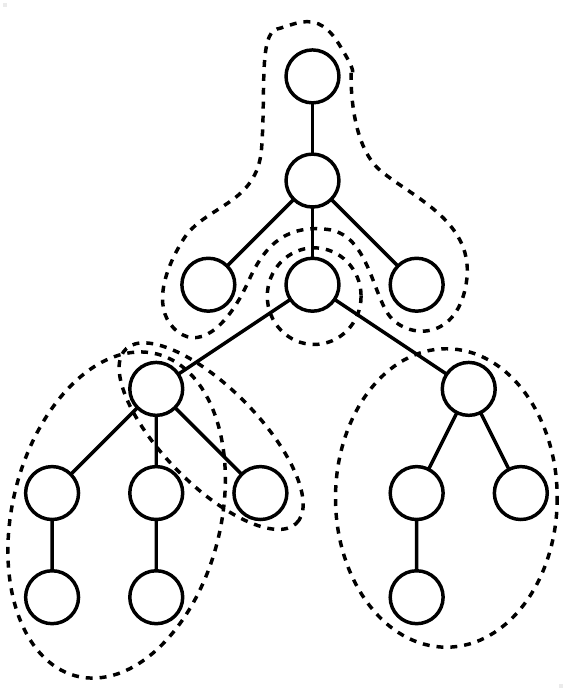}}
\hfill
\subfloat[\label{fig:decomposition-new}]{
\includegraphics[scale=0.5]{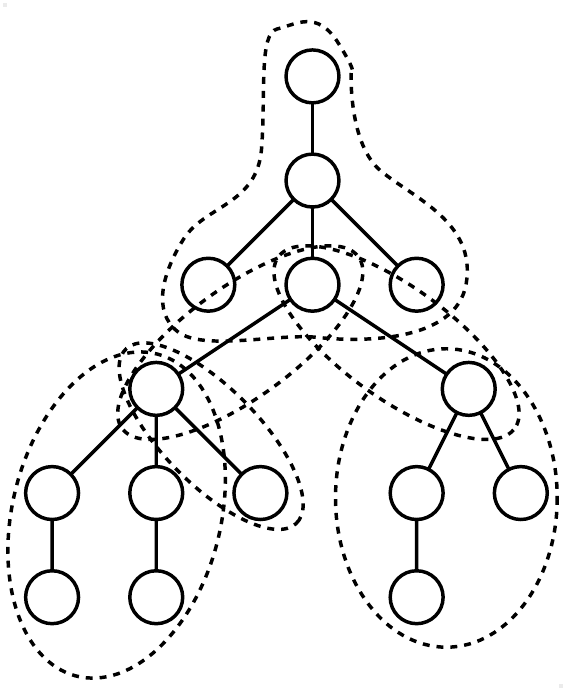}}
\hfill
\subfloat[\label{fig:decomposition-TL}]{
\includegraphics[scale=0.5]{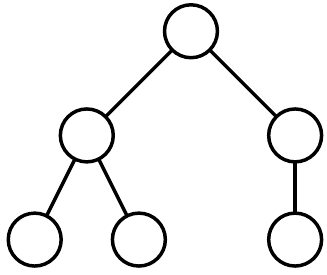}}
\hfill\hspace*{0pt}
\caption{Example of tree decomposition using $L=4$.
Figure~(a) shows the decomposition of Farzan and Munro,
and Figure~(b) shows are modified decomposition.
Figure~(c) shows the tree $\T{T}{L}$.
}
\end{figure}

For a tree $T$ and an integer $L$ we define a tree $\T{T}{L}$ as follows
(see Figure~\ref{fig:decomposition-TL}).
Construct a tree decomposition $\D{T}{L}$ according to
Lemma~\ref{lem:tree-cover}.
If the root $r$ of $T$ appears in several trees of $\D{T}{L}$,
add to $\D{T}{L}$ a tree that consists of $r$.
The set of nodes of $\T{T}{L}$ is the set $\D{T}{L}$.
For two trees $x',y'\in\D{T}{L}$, $x'$ is the parent of $y'$ in $\T{T}{L}$
if and only if the root of $y'$ is equal to the special node of $x'$
(or $x'$ is the tree that consists of $r$).

\section{Representation of labeled trees}
As in He et al.~\cite{HeMZ12}, we build trees $T^\alpha$ for every
$\alpha\in\range{1}{\sigma}$.
To build $T^\alpha$, we temporarily add to $T$ a new root with
label $\alpha$.
Then, let $X_\alpha$ be the set of all $\alpha$-nodes in $T$ and
their parents.
The nodes of $T^\alpha$ are the nodes of $X_\alpha$, and $x$ is a parent of
$y$ in $T^\alpha$ if and only if $x$ is the lowest (proper) ancestor of $y$
that appears in $X_\alpha$.
Unlike He et al., we do not store the tree $T^\alpha$.
Instead, we store a weighted tree that contains only part of the information
of $T^\alpha$.
The weighted tree is the tree $\T{T^\alpha}{L}$ obtained from
the tree decomposition of Lemma~\ref{lem:tree-cover}.
For the case $\sigma=w^{O(1)}$ the value of $L$ is
$L=f(n)$, where $f$ is a function that satisfies $f(n)=\omega(1)$,
and for large $\sigma$ set $L=\sqrt{\log\frac{\log\sigma}{\log w}}$.
Denote by $n_\alpha$ the number of $\alpha$-nodes in $T$.
We say that a character $\alpha$ is \emph{frequent} if $n_\alpha \geq L$.
We only construct the trees $\T{T^\alpha}{L}$ for frequent characters.

For a node $x'$ in $\T{T^\alpha}{L}$, let $\V{x'}$ be the set
of all $\alpha$-nodes in the tree $x'$, excluding the root of $x'$.
Let $\C{x'}$ be the leftmost $\alpha$-child of the root of $x'$
among the nodes of $\V{x'}$.
We define the following weight functions for $\T{T^\alpha}{L}$.
\begin{enumerate}
\item $w_{\idxtotal}(x')$ is the number of nodes in $\V{x'}$.
\item $w_{\idxle}(x')$ is the number of nodes in $\V{x'}$
whose preorder rank in $T^\alpha$ is
less than or equal to the preorder rank the special node of $x'$.
\item $w_{\idxgt}(x')$ is equal to $w_{\idxtotal}(x')-w_{\idxle}(x')$.
\item $w_{\idxeq}(x')$ is the number of nodes in $\V{x'}$
which are ancestors of the special node of $x'$.
\item $w_{\idxchildren}(x')$ is the number of $\alpha$-children of the
root of $x'$.
\item $w_{\idxleftmostchild}(x')$ is the rank of $\C{x'}$ among the nodes
of $\V{x'}$, when the nodes are sorted according to preorder
(if $\C{x'}$ does not exist, $w_{\idxleftmostchild}(x') = 0$).
\item $w_{\idxspecial}(x')$ is equal to $1$ if the special node of
$x'$ is an $\alpha$-node, and $0$ otherwise.
\end{enumerate}

Our representation of $T$ consists of the following data-structures.
We store a rank-select structure on $P_T$, using one of the structures of
Theorem~\ref{thm:rank-select} according to the size of the alphabet.
We also store an unlabeled tree $\hat{T}$ obtained from $T$ by
removing the labels.
$\hat{T}$ is stored using Theorem~\ref{thm:unlabeled-tree}.
We also store the trees $\T{T^\alpha}{L}$.
In order to reduce the space, we do not store these trees individually.
Instead, we merge them into a single tree $T'$.
The tree $T'$ contains a new root node, on which the trees $\T{T^\alpha}{L}$
are hanged, ordered by increasing values of $\alpha$.
The tree $T'$ is stored using the representation of
Lemma~\ref{lem:weighted-tree}.
In order to map a node of $\T{T^\alpha}{L}$
to the corresponding node of $T'$, and vice versa, we store
the rank-select structure of Theorem~\ref{thm:rank-select-binary}
on the string $N=10^{n_{\alpha_1}}10^{n_{\alpha_2}}\cdots$, where
$\alpha_1 < \alpha_2 < \cdots$ are the frequent characters.
We also store the rank-select structure of Theorem~\ref{thm:rank-select-binary}
on a binary string $F$ of length $\sigma$ in which $F[\alpha]=1$ if
$\alpha$ is frequent.
Since a mapping between nodes of $\T{T^\alpha}{L}$ and nodes of $T'$ can
be computed in constant time, in the following we shall assume that the trees
$\T{T^\alpha}{L}$ are available.

We now analyze the space complexity of the representation.
The space for $P_T$ is $n H_0(P_T)+o(n)$ bits for small alphabet,
and $n H_0(P_T)+o(n H_0(P_T))+o(n)$ bits for large alphabet.
The space for $\hat{T}$ is $2n+o(n)$ bits.
The tree $T'$ has $O(n/L)$ nodes, and the weight functions have ranges
$\range{0}{L}$. Thus, the space for $T'$ is $O((n/L)\log L)+o(n)=o(n)$ bits.
The strings $N$ and $F$ have at most $n$ zeros and at most $n/L$ ones.
Thus, the space for the rank-select structures on these strings is
$O((n/L)\log L)+o(n)=o(n)$ bits.
Therefore, the total space of the representation
is $n H_0(P_T)+2n+o(n)$ for small alphabet,
and $n H_0(P_T)+2n+o(n H_0(P_T))+o(n)$ for large alphabet.

In the following, when we use an unlabeled tree operation
(e.g., $\parentx{x}$) we assume that the operation is performed on
$\hat{T}$, and when we use a weighted tree operation  we assume that the
operation is performed on $\T{T^\alpha}{L}$.

\subsection{Mapping from $\T{T^\alpha}{L}$ to $T$}

Let $x'$ be a node of $ \T{T^\alpha}{L}$.
From property~\ref{enum:interval} of Lemma~\ref{lem:tree-cover} there are
two intervals $[l_1(x'),r_1(x')]$ and $[l_2(x'),r_2(x')]$
such that an $\alpha$-node $x$ of $T$ is a non-root node of $x'$
if and only if the rank of $x$ among all the $\alpha$-nodes of $T$,
sorted in preorder, is in one of the intervals.
These intervals can be computed as follows:
\begin{align*}
r_1(x') & = \bprank{\idxle,\idxgt}{\bpopen{x'}}\\
l_1(x') & = r_1(x')-w_{\idxle}(x')+1\\
r_2(x') & = \bprank{\idxle,\idxgt}{\bpclose{x'}}\\
l_2(x') & = r_2(x')-w_{\idxgt}(x')+1
\end{align*}
The set $\V{x'}$ can be computed with
\[ V(x') = \{\preorderselectx{\select{\alpha}{P_T}{s}}
  : s \in [l_1(x'),r_1(x')]\cup[l_2(x'),r_2(x')] \}, \]
and $\C{x'}$ can be computed with $\preorderselectx{\select{\alpha}{P_T}{k}}$,
where $k = l_1(x')+w_{\idxleftmostchild}(x')-1$ if
$w_{\idxleftmostchild}(x') \leq w_{\idxle}(x')$,
and $k = l_2(x')+w_{\idxleftmostchild}(x')-w_{\idxle}(x')-1$ otherwise.

\subsection{Mapping from $T$ to $\T{T^\alpha}{L}$}
Let $x$ be a node of $T$ and $\alpha$ be a frequent character.
If $x\in X_\alpha$, we want to compute the node $x'\in \T{T}{L}$ such
that $x\in \V{x'}$.
We denote this node by $\map{x}$.
Additionally, we compute whether $x$ is the special node of $\map{x}$.

We consider two cases. The first case is when $x$ is an $\alpha$-node.
Then,
\[ \map{x}=\bpnode{\bpselect{\idxle,\idxgt}{\preorderrank{\alpha}{x}}}. \]
Moreover, $x$ is the special node of $\map{x}$ if and only if
$w_{\idxspecial}(x')=1$ and $\preorderrank{\alpha}{x}=r_1(x')$.

Now suppose that $x$ is not an $\alpha$-node.
The algorithm for computing $\map{x}$ is as follows.
\begin{enumerate}
\item
Let $u$ and $v$ be the $\alpha$-descendants of $x$ with minimum and maximum
preorder ranks, respectively
($u,v$ are computed using the structures on $P_T$ and $\hat{T}$).
If $u,v$ do not exist return $\NULL$.
\item
Compute $u'=\map{u}$ and $v'=\map{v}$.
\item
If $x\neq \lcax{u}{v}$
\begin{enumerate}
\item
If $\parentx{u} \neq x$ return $\NULL$.
\item
Compute $\C{u'}$.
If $\C{u'}=u$ then
$\map{x}=\parentx{u'}$ and $x$ is the special node of $\map{x}$.
Otherwise, $\map{x}=u'$ and $x$ is not the special node.
\label{alg:x-not-lca}
\end{enumerate}
\item
If $x=\lcax{u}{v}$
\begin{enumerate}
\item
If $u'$ is an ancestor of $v'$ (including $u'=v'$), compute $\V{u'}$.
Scan the nodes of $\V{u'}$ and check for each node whether it is a child of $x$.
If no child of $x$ was found, return $\NULL$.
Compute $\C{u'}$. If $\C{u'}$ exists and $\parentx{\C{u'}}=x$
then $\map{x}=\parentx{u'}$ and $x$ is the special node of $\map{x}$.
Otherwise, $\map{x}=u'$ and $x$ is not the special node.
\item If $v'$ is a proper ancestor of $u'$, 
handle this case analogously to the handling of the previous case.
\item If neither node $u'$ or $v'$ is an ancestor of the other,
$\map{x}=\lcax{u'}{v'}$ and $x$ is the special node of $\map{x}$.
\end{enumerate}
\end{enumerate}
To see the correctness of the algorithm above, observe that
if $u,v$ do not exist then $x$ does not have $\alpha$-children.
Since $x$ is not an $\alpha$-node, by definition $x \notin T^\alpha$.
Now suppose that $u,v$ exist.
If $x \neq \lcax{u}{v}$, the only possible $\alpha$-child of $x$ is $u$.
Thus, if $\parentx{u}\neq x$ then $x\notin T^\alpha$.
If $\parentx{u} = x$ then $x$ is a node of $T^\alpha$.
By the definition of $\map{\cdot}$, $u$ is not the root of $u'$,
and therefore the node $x$ is also a node of $u'$.
Thus, $\map{x}=\parentx{u'}$ if $x$ is the root of $u'$, and
$\map{x}=u'$ otherwise.
Since $u$ is the only $\alpha$-child of $x$, it follows that the former case
occurs if and only if $\C{u'}=u$.

Now consider the case $x = \lcax{u}{v}$.
Denote by $u_2$ and $v_2$ the ancestors of $u$ and $v$ that are
children of $x$ in $T^\alpha$.
By the definition of $u$ and $v$, if $x$ has $\alpha$-children then
every $\alpha$-child $y$ of $x$ is between $u_2$ and $v_2$ in $T^\alpha$.
If  $u'$ is an ancestor of $v'$ then $u_2$ and $v_2$ must belong to the
tree $u'$.
Thus, all the $\alpha$-children of $x$ (if there are any) belong to $u'$.
Therefore, it suffices to scan the nodes of $\V{u'}$ in order to decide
whether $x$ has $\alpha$-children.
If $x$ has $\alpha$-children then these children are non-root nodes of $u'$.
It follows that $x$ belongs to the tree $u'$.
Moreover,
$\map{x}=\parentx{u'}$ if $x$ is the root of $u'$, and
$\map{x}=u'$ otherwise.

\subsection{Answering queries}
In this section describe how to implement the queries of
Table~\ref{tab:labeled-queries}.
The queries $\Labelb$, $\preorderrankb$, $\preorderselectb$,
$\numdescendantsb$ and $\postorderrankb$ are answered as in~\cite{HeMZ12}.
Answering the remaining queries is also similar to~\cite{HeMZ12},
but here additional steps are required as a weighted tree $\T{T^\alpha}{L}$
holds less information than $T^\alpha$.
The general idea is to use the tree $\T{T^\alpha}{L}$ to get an approximate
answer to the query. Then, by enumerating the nodes of constant number
of $\V{\cdot}$ set, the exact answer is found.
The time complexity of answering a query is thus $O(L\cdot\tselect)$,
where $\tselect$ is the time for a select query on $P_T$.
Since $\tselect = O(1)$ for small alphabet
and $\tselect = o(\sqrt{\log\frac{\log\sigma}{\log w}})$ for large alphabet,
it follows that the time for answering a query is $\omega(1)$ for small alphabet
and $O(\log\frac{\log\sigma}{\log w})$ for large alphabet.

We assume that $\alpha$ is a frequent character until the end of the section.
Handling queries in which $\alpha$ is non-frequent is done by enumerating all
the $\alpha$-nodes in $T$.

\subsubsection{$\parent{\alpha}{x}$ query}
We consider two cases.
If $x$ is an $\alpha$-node the query is answered as follows.
\begin{enumerate}
\item Compute $x'=\map{x}$.
\item Compute $\V{x'}$. Scan the nodes of $\V{x'}$ in reverse order,
and check for each node $v$ whether $v$ is an ancestor of $x$.
If an ancestor of $x$ is found, return it.
\item
Compute $y' = \parent{\idxeq}{x'}$.
If $y'$ does not exist return $\NULL$.
\item
Compute $\V{y'}$. Scan the nodes of $\V{y'}$ in reverse order,
and check for each node $v$ whether $v$ is an ancestor of $x$.
When an ancestor of $x$ is found, return it.
\end{enumerate}
If $x$ is not an $\alpha$-node then the query is answered as
in He et al.~\cite{HeMZ12}.


\subsubsection{$\depth{\alpha}{x}$ query}
\begin{enumerate}
\item
Let $y=x$ if $x$ is an $\alpha$-node, and $y=\parent{\alpha}{x}$ otherwise.
\item
Compute $y' = \map{y}$.
\item
Compute $\V{y'}$ and scan the nodes of $\V{y'}$.
Count the number of nodes that are ancestors of $x$,
and let $i$ denote this number.
\item
Return $i + \depth{\idxeq}{y'}-w_{\idxeq}(y')$.
\end{enumerate}

\subsubsection{$\levelancestor{\alpha}{x}{i}$ query}
\begin{enumerate}
\item
Compute $y = \parent{\alpha}{x}$.
If $y$ does not exist return $\NULL$.
\item
If $i=1$ return $y$.
\item
Compute $y' = \map{y}$.
\item
Compute $\V{y'}$. Scan the nodes of $\V{y'}$ in reverse order.
For each node $v$, if $v$ is an ancestor of $x$, decrease $i$ by $1$.
If $i$ becomes $0$ return $v$, and otherwise continue with the scan.
\item
Compute $z' = \levelancestor{\idxeq}{y'}{i}$.
If $z'$ does not exist return $\NULL$.
\item
Set $i \gets i-(\depth{\idxeq}{\parentx{y'}}-\depth{\idxeq}{z'})$.
\item
Compute $\V{z'}$. Scan the nodes of $\V{z'}$ in reverse order.
For each node $v$, check whether $v$ is an ancestor of $x$.
If $v$ is an ancestor, decrease $i$ by $1$.
If $i$ becomes $0$ return $v$, and otherwise continue with the scan.
\end{enumerate}

\subsubsection{$\Deg{\alpha}{x}$ query}
\begin{enumerate}
\item
Compute $x' = \map{x}$.
If $x'$ does not exist return $0$.
\item
If $x$ is not the special node of $x'$,
compute $\V{x'}$ and scan the nodes of $\V{x'}$.
Return the number of nodes that are children of $x$.
\item
Return $\Deg{\idxchildren}{x'}$.
\end{enumerate}

We now explain the correctness of the algorithm above.
If $x$ is not the special node of $x'$ then all the children of $x$
in $T^\alpha$ are in the tree $x'$. Thus, it suffices to scan $\V{x'}$.
If $x$ is the special node, the set of $\alpha$-children of $x$ is
precisely the set of $\alpha$-children of the roots of all trees $y'$
such that $y'$ is a child of $x'$ in $\T{T^\alpha}{L}$.
Therefore, $\Deg{\alpha}{x}=\Deg{\idxchildren}{x'}$.

\subsubsection{$\childrank{\alpha}{x}$ query}
If $x$ is an $\alpha$-node the query is answered as follows.
\begin{enumerate}
\item
Compute $x' = \map{x}$.
\item
Compute $\V{x'}$ and scan the nodes of $\V{x'}$.
Count the number of nodes that are left sibling of $x$,
and let $i$ denote this number.
\item
Compute $u = \parentx{x}$ and $u' = \map{u}$.
\item
If $u$ is not the special node of $u'$ return $i$.
\item
Return $i + \childrank{\idxchildren}{x'}-w_{\idxchildren}(x')$.
\end{enumerate}
We now consider the case when $x$ is not an $\alpha$-node.
\begin{enumerate}
\item
Compute $u = \parentx{x}$ and $u' = \map{u}$.
If $u'$ does not exist return $0$.
\item
If $u$ is not the special node of $u'$,
compute $\V{u'}$ and scan the nodes of $\V{u'}$.
Return the number of nodes that are left sibling of $x$.
\item
Let $v$ be the $\alpha$-predecessor.
If $v$ does not exist or if $\preorderrankx{v} \leq \preorderrankx{u}$
return $0$.
\item
Compute $v'=\map{v}$.
\item
Let $w'$ be the child of $u'$ which is an ancestor of $v'$.
\item
Compute $\V{w'}$ and scan the nodes of $\V{w'}$.
Count the number of nodes that are left sibling of $x$,
and let $i$ denote this number.
\item
Return $i + \childrank{\idxchildren}{x'}$.
\end{enumerate}

\subsubsection{$\childselect{\alpha}{x}{i}$ query}
\begin{enumerate}
\item
Compute $x' = \map{x}$.
If $x$ does not exist return $\NULL$.
\item
If $x$ is not the special node of $x$, 
compute $\V{x'}$ and scan the nodes of $\V{x'}$.
For each node $v$, check whether $v$ is a child of $x$.
Return the $i$-th child.
\item
Compute $y'=\childselect{\idxchildren}{x'}{i}$.
\item
Set $i\gets i-(\childrank{\idxchildren}{y'}-w_{\idxchildren}(y'))$.
\item
Compute $\V{y'}$ and scan the nodes of $\V{y'}$.
For each node $v$, check whether $v$ is a child of $x$.
Return the $i$-th child.
\end{enumerate}

\subsubsection{$\postorderselect{\alpha}{i}$ query}
\begin{enumerate}
\item
Return $\bpnode{\bpselect{\idxzero,\idxtotal}{i}}$.
\end{enumerate}


\bibliographystyle{plain}
\bibliography{string-index}
\end{document}